\let\e\varepsilon
\newcommand{\D}{\mathbb{D}}
\newcommand{\Z}{\mathbb{Z}}
\def\set#1{\left\{ #1 \right\}}
\def\polylog{\operatorname{polylog}}
\let\Z\Integer
\def\set#1{\left\{ #1 \right\}}
\def\floor#1{\left\lfloor #1 \right\rfloor}
\def\abs#1{\left| #1 \right|}
\begin{document}
\title{Almost optimum $\ell$-covering of $\Z_n$} 

\author{Ke Shi\inst{1,2} \and
Chao Xu\inst{1}
}
\institute{
    School of Computer Science and Engineering,\\ University of Electronic Science and Technology of China
        \and
    School of Computer Science and Technology,\\ University of Science and Technology of China}

\maketitle

%TODO mandatory: add short abstract of the document
\begin{abstract}

A subset $B$ of the ring $\Z_n$ is referred to as a $\ell$-covering set if $\{ ab \pmod n \mid 0\leq a \leq \ell, b\in B\} = \Z_n$.
We show that there exists a $\ell$-covering set of $\Z_n$ of size $O(\frac{n}{\ell}\log n)$ for all $n$ and $\ell$, and how to construct such a set.
We also provide examples where any $\ell$-covering set must have a size of $\Omega(\frac{n}{\ell}\frac{\log n}{\log \log n})$. The proof employs a refined bound for the relative totient function obtained through sieve theory and the existence of a large divisor with a linear divisor sum. The result can be used to simplify a modular subset sum algorithm. 
\end{abstract}

\section{Introduction}
\label{sec:Intro}
For two sets $A,B\subseteq \Z_n$, we let $A\cdot B = \set{ab \pmod n \mid a\in A,b\in B}$. Let $[\ell] = \set{0,\ldots,\ell}$ be the natural numbers no larger than $\ell$.
A subset $B$ of the ring $\Z_n$ is termed a \emph{$\ell$-covering set} if $(\Z_n\cap [\ell])\cdot B=\Z_n$. Let $f(n,\ell)$ be the size of the smallest $\ell$-covering set of $\Z_n$, we are interested in finding $f(n,\ell)$. Equivalently, we can define a \emph{segment} of slope $i$ and length $\ell$ to be $\set{ ix \pmod n \mid x\in \Z_n\cap [\ell] }$, and we are interested in finding a set of segments that covers $\Z_n$.

$\ell$-coverings were used for flash storage related problems, including covering codes \cite{6151153, Klove.S2014, Klove.2016}, rewriting schemes\cite{6476065}. It also has been generalized to $\Z^d_n$ \cite{Klove.2016}. An $\ell$-covering is also useful in algorithm design. Since we can \emph{compress} a segment by dividing everything by its slope, an algorithm, where the running time depends on the size of the numbers in the input, can be improved. An implicit but involved application of $\ell$-covering was crucial for the first significant improvement to the modular subset sum problem \cite{Koili.X2019}.

The major question lies in finding the appropriate bound for $f(n,\ell)$. The trivial lower bound is $f(n,\ell) \geq \frac{n}{\ell}$. On the upper bound of $f(n,\ell)$, there are multiple studies where $\ell$ is a small constant, or $n$ has lots of structure, like being a prime number or maintaining certain divisibility conditions \cite{6151153, Klove.S2014, Klove.2016}. A fully general non-trivial upper bound for all $\ell$ and $n$ was first established by Chen et.al., which shows an explicit construction of an $O(\frac{n (\log n)^{\omega(n)}}{\ell^{1/2}})$ size $\ell$-covering set. They also showed $f(n,\ell) \leq \frac{n^{1+o(1)}}{\ell^{1/2}}$ using the fourth moment of character sums, but without providing a construction \cite{Chen.S.W2013}. In the same article, the authors show $f(p,\ell) = O(\frac{p}{\ell})$ for prime $p$ with an explicit construction. Koiliaris and Xu improved the result by a factor of $\sqrt{\ell}$ for general $n$ and $\ell$ using basic number theory, and showed $f(n,\ell) = \frac{n^{1+o(1)}}{\ell}$ \cite{Koili.X2019}. An $\ell$-covering set of equivalent size can also be found in $O(n\ell)$ time. The value hidden in $o(1)$ could be as large as $\Omega(\frac{1}{\log \log n})$, so it is relatively far from the lower bound. However, a closer examination of their result reveals that $f(n,\ell) = O(\frac{n}{\ell}\log n\log \log n)$ if $\ell$ is neither too large nor too small. That is, if $t \leq \ell \leq n/t$, where $t=n^{\Omega(\frac{1}{\log \log n})}$. See \Cref{fig:comp} for comparison of the results.

The covering problem can be considered in a more general context. For any \emph{semigroup} $(M, \diamond)$, define $A \diamond B = \set{a \diamond b \mid a\in A, b\in B}$. For $A\subseteq M$, we are interested in finding a small $B$ such that $A \diamond B = M$. Here $B$ is called an $A$-covering. The $\ell$-covering problem is the special case where the semigroup is $(\Z_n, \cdot)$, and $A=\Z_n\cap [\ell]$. When $M$ is a group, it was studied in \cite{Bollobas2011}. In particular, they showed for a finite group $(G,\diamond)$ and any $A\subseteq G$, there exists an $A$-covering of size no larger than $\frac{|G|}{|A|}(\log |A|+1)$. We wish to emphasize that our problem is based on the \emph{semigroup} $(\Z_n, \cdot)$, which is \emph{not a group}, and therefore, can exhibit very different behaviors. For example, if $A$ consists of only elements divisible by $2$ and $n$ is divisible by $2$, then no $A$-covering of $(\Z_n,\cdot)$ exists. It was shown that there exists $A$ that is a set of $\ell$ consecutive integers, any $A$-covering of $(\Z_n,\cdot)$ has $\Omega(\frac{n}{\ell}\log n)$ size \cite{roche2018packing}. Hence, the choice of the set $\Z_n\cap [\ell]$ is very special, as there are examples where $\ell$-covering has $O(\frac{n}{\ell})$ size \cite{Chen.S.W2013}. For reasons apparent in later part of the paper, we use $\ell$-covering in a semigroup $(X,\cdot)$ to mean a $(X\cap [\ell])$-covering. In the pursuit of our main theorem, another instance of the covering problem emerges, which might be of independent interest. Let the semigroup be $(\D_n,\odot)$, where $\D_n$ is the set of divisors of $n$, and $a \odot b = \gcd(ab, n)$, where $\gcd$ is the greatest common divisor function. We are interested in finding a $s$-covering set of $\D_n$ for some $s<n$.
\begin{figure*}
\begin{center}
\begin{tabular}{ |c|c|c| }
 \hline
  & Size of $\ell$-covering & Construction Time \\ 
 \hline
  & & \\[-1em]
 Chen et. al.    \cite{Chen.S.W2013}          & $O\left(\frac{n (\log n)^{\omega(n)}}{\ell^{1/2}}\right)$ & $\tilde{O}\left(\frac{n (\log n)^{\omega(n)}}{\ell^{1/2}}\right)$ \\ 
 & & \\[-1em]
 \hline
  & & \\[-1em]
 Chen et. al. \cite{Chen.S.W2013}         & $\frac{n^{1+o(1)}}{\ell^{1/2}}$ & Non-constructive \\ 
  & & \\[-1em]
 \hline
  & & \\[-1em]
 Koiliaris and Xu \cite{Koili.X2019} & $\frac{n^{1+o(1)}}{\ell}$ & $O(n\ell)$ \\ 
  & & \\[-1em]
  \hline 
  & & \\[-1em]
 \Cref{thm:main}       & $O(\frac{n}{\ell}\log n)$ & $O(n\ell)$\\
  \hline
 & & \\[-1em]
 \Cref{thm:randconstruction}       & $O(\frac{n}{\ell}\log n\log\log n)$ & $\tilde{O}(\frac{n}{\ell}) + n^{o(1)}$ randomized\\[+0.2em]
 \hline
\end{tabular}
\end{center}
\caption{Comparison of results for $\ell$-covering for arbitrary $n$ and $\ell$. $\omega(n)$ is the number of distinct prime factors of $n$.}
\label{fig:comp}
\end{figure*}

\subsection{Our Contributions}

\begin{enumerate}
\item We demonstrate that $f(n,\ell) = O(\frac{n}{\ell}\log n)$, and a slightly larger $\ell$-covering of size $O(\frac{n}{\ell}\log n \log \log n)$ can be found in $\tilde{O}(\frac{n}{\ell})+n^{o(1)}$ time.
\item We establish the existence of a constant $c>0$ and an infinite number of $n$ and $\ell$ pairs, such that $f(n,\ell) \geq c \frac{n}{\ell} \frac{\log n}{\log \log n}$.
\end{enumerate}

As an application, we show the new result simplifies the algorithm of \cite{Koili.X2019} for modular subset sums. In addition to these main contributions, we also offer some intriguing auxiliary results in number theory. These include a more precise bound for the relative totient function, as well as the discovery of a large divisor accompanied by a linear divisor sum.

\subsection{Technical overview}

Our approach is similar to the one of Koiliaris and Xu \cite{Koili.X2019}. We briefly describe their approach.
Recall $\Z_n$ is the set of integers modulo $n$. We further define $\Z_{n,d} =\set{x \mid \gcd(x,n)=d, x\in \Z_n}$, and $\Z^*_n = \Z_{n,1}$. Let $\mathcal{S}_\ell(X)$ be the set of segments of length $\ell$ and slope in $X$. Their main idea is to convert the covering problem over the \emph{semigroup} $(\Z_n,\cdot)$ to covering problems over the \emph{group} $(\Z^*_{n/d},\cdot)$ for all $d\in \D_n$. Since $\Z_{n,d}$ forms a partition of $\Z_n$, one can reason about covering them individually. That is, covering $\Z_{n,d}$ by $\mathcal{S}_\ell(\Z_{n,d})$. This is equivalent to covering $\Z^*_{n/d}$ with $\mathcal{S}_\ell(\Z^*_{n/d})$, and then lifting to a cover in $\Z_{n,d}$ by multiplying everything by $d$. Hence, now we only have to work with covering problems over $(\Z^*_{n/d},\cdot)$ for all $d$ and $n\geq 2$, all of which are \emph{groups}. The covering results for groups can be readily applied \cite{Bollobas2011}. Once we find the covering for each individual $(\Z^*_{n/d},\cdot)$, we take their union, and obtain an $\ell$-covering.

The approach was sufficient to obtain $f(n,\ell) = O(\frac{n}{\ell}\log n\log \log n)$ if $\ell$ is neither \emph{too small} nor \emph{too large}. However, their result suffers when $\ell$ is extreme in one of the two ways.

\begin{enumerate}
\item $\ell=n^{1-o(\frac{1}{\log \log n})}$: Any covering obtained would have size at least the number of divisors of $n$, which in the worst case can be $n^{\Omega(\frac{1}{\log \log n})}$, and dominates $\frac{n}{\ell}$.
\item $\ell=n^{o(\frac{1}{\log \log n})}$: If we are working on covering $\Z^*_n$, we need to know $|\Z^*_n\cap [\ell]|$, also known as $\phi(n,\ell)$. Previously, the estimate for $\phi(n,\ell)$ was insufficient when $\ell$ is small.
\end{enumerate}

Our approach can extend the applicable range to all $\ell$, and also eliminates the extra $\log \log n$ factor. There are two steps: First, we improve the estimate for $\phi(n,\ell)$. This improvement alone is sufficient to handle the cases when $\ell$ is relatively small compared to $n$. Second, we show that, roughly, a small $\ell'$-covering of $\D_n$ with some additional nice properties implies a small $\ell$-covering of $\Z_n$, where $\ell'$ is some number not too small compared to $\ell$. This change can shave off the $\log \log n$ factor.

\paragraph*{Organization}

The paper is organized as follows. \Cref{sec:prelim} contains the necessary number theory background. \Cref{sec:numbertheory} describes some number theoretical results on bounding $\phi(n,\ell)$, finding a large divisor of $n$ with a linear divisor sum, and covering of $\D_n$. \Cref{sec:bounds} proves the main theorem that $f(n,\ell) = O(\frac{n}{\ell}\log n)$, discusses its construction, and also provides a lower bound.

\section{Preliminaries}\label{sec:prelim}
This paper utilizes a few simple algorithmic concepts, but our methods are primarily analytical. Therefore, we have reserved some space in the preliminaries to set the scene.

Let $\mathcal{X}$ be a collection of subsets in some universe set $U$. A \emph{set cover} of $U$ is a subcollection of $\mathcal{X}$ whose union covers $U$. Formally, $\mathcal{X}'$ is a set cover of $U$ if $\mathcal{X}'\subseteq \mathcal{X}$ such that $U= \bigcup_{X\in \mathcal{X}'} X$. The \emph{set cover problem} is the computational problem of finding a set cover of minimum cardinality.

All multiplications in $\Z_n$ are modulo $n$, and henceforth we will omit the "$\pmod n$" notation.
A set of the form $\set{ ix \mid x \in \Z_n\cap [\ell]}$ is called a \emph{segment} of length $\ell$ with slope $i$. Note that a segment of length $\ell$ might contain fewer than $\ell$ elements. Recall that $\mathcal{S}_\ell(X)$ represents the collection of segments of length $\ell$ with slopes in $X$, namely $\set{ \set{ ix \mid x\in \Z_n\cap [\ell]} \mid i\in X}$. Thus, finding an $\ell$-covering is equivalent to the set cover problem where the universe is $\Z_n$ and the collection of subsets is $\mathcal{S}_\ell(\Z_n)$.

There are well-known bounds relating the size of a set cover to the frequency of each element in the cover.

\begin{theorem}[\cite{Lovas.1975,Stein.1974}]\label{thm:bettergreedy}
Let there be a collection of $t$ sets each with size at most $a$, and each element of the universe is covered by at least $b$ of the sets, then there exists a subcollection of $O(\frac{t}{b}\log a)$ sets that covers the universe.
\end{theorem}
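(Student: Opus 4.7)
The plan is to analyze the greedy set cover algorithm through its LP relaxation. First I would observe that the fractional set cover LP has optimum $\tau^*(\mathcal{X}) \leq t/b$: since every element lies in at least $b$ of the $t$ sets, the uniform assignment $y_S = 1/b$ for every $S \in \mathcal{X}$ is a feasible fractional cover of total weight $t/b$. The remaining task is to show that the greedy algorithm, which repeatedly picks the set covering the most as-yet-uncovered elements, produces an integral cover of size at most $H_a \cdot \tau^*$, where $H_a = 1 + \tfrac{1}{2} + \cdots + \tfrac{1}{a} = O(\log a)$.

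For the greedy bound I would use the standard dual-fitting argument. When the greedy step picks a set $S_j$ covering $k_j$ previously uncovered elements, charge each newly covered element a \emph{price} of $1/k_j$; the number of chosen sets equals the total price paid. For any fixed set $S \in \mathcal{X}$, enumerate its elements $e_1, \ldots, e_{|S|}$ in the order they first become covered. At the moment $e_i$ is covered, $S$ itself still contains at least $|S|-i+1$ uncovered elements, so the greedy choice at that step covered at least $|S|-i+1$ new elements; hence $\mathrm{price}(e_i) \leq 1/(|S|-i+1)$, and summing gives $\sum_{e \in S} \mathrm{price}(e) \leq H_{|S|} \leq H_a$. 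If $y$ is any feasible fractional cover with $\sum_S y_S = \tau^*$, then
\[
|\text{greedy}| \;=\; \sum_e \mathrm{price}(e) \;\leq\; \sum_e \Bigl(\sum_{S \ni e} y_S\Bigr) \mathrm{price}(e) \;=\; \sum_S y_S \sum_{e \in S} \mathrm{price}(e) \;\leq\; H_a \cdot \tau^*.
\]
Chaining with $\tau^* \leq t/b$ yields a cover of size $O\!\left(\tfrac{t}{b}\log a\right)$, as required.

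The delicate point is the price inequality $\mathrm{price}(e_i) \leq 1/(|S|-i+1)$: it relies on the greedy rule to guarantee that at every iteration the chosen set is at least as large (restricted to uncovered elements) as any particular $S$. Alternative routes such as pure iterative halving (at each step some set must contain $\geq b|U'|/t$ uncovered elements, shrinking $|U'|$ by a factor $1 - b/t$) or randomized rounding of the LP both yield only the weaker bound $O(\tfrac{t}{b}\log|U|)$ and cannot recover the $\log a$ factor when $|U| \gg a$. Coupling the greedy analysis with the uniform LP witness is therefore essential, and this is the only nontrivial obstacle in the proof.
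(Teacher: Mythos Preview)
Your argument is correct and is exactly the classical Lov\'asz--Stein dual-fitting analysis of the greedy cover. Note, however, that the paper does not supply its own proof of this theorem: it is quoted as a known result with citations to \cite{Lovas.1975,Stein.1974}, and the paper only remarks afterwards that ``the greedy algorithm is sufficient'' to attain the stated bound. So there is nothing to compare against beyond confirming that your proof is the standard one underlying those references, which it is.
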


The above theorem serves as our primary combinatorial tool for bounding the size of a set cover. To achieve a cover of the desired size, we find the greedy algorithm to be sufficient. It is worth noting that the group covering theorem for finite groups, as presented in \cite{Bollobas2011}, is a direct application of this principle.

In this context, the base of the $\log$ is $e$. To avoid dealing with negative values, we define $\log(x)$ as $\max(\log(x),1)$. We use $\tilde{O}(f(n))$, the soft $O$, as shorthand for $O(f(n)\polylog n)$.

\subsection{Number theory}
We utilize some standard notations and bounds listed in \Cref{fig:notations}, which can be found in various analytic number theory textbooks, for example, \cite{Davenport2000-nd}.

\begin{figure*}
\begin{center}
\begin{tabular}{|c|c|}
\hline
Notation & Definition/Property \\
\hline
$\Z_n$ & Set of integers modulo $n$ \\
$\Z_{n,d}$ & $\{x \mid \gcd(x,n)=d, x\in \Z_n\}$ \\
$\Z^*_n$ & Set of numbers in $\Z_n$ that are relatively prime to $n$ \\
$m|n$ & $m$ is a divisor of $n$ \\
$\pi(n)$ & Prime counting function, $\pi(n)=\Theta(\frac{n}{\log n})$ \\
$\phi(n)$ & Euler totient function, $\phi(n) = |\Z^*_n| = n\prod_{p|n,p\text{ prime}}\left(1-\frac{1}{p}\right)$ \\
$\phi(n)$ bound & $\phi(n) = \Omega(\frac{n}{\log \log n})$ \\
$\omega(n)$ & Number of distinct prime factors of $n$, $\omega(n) = O(\frac{\log n}{\log \log n})$ \\
$d(n)$ & Divisor function, $d(n) = n^{O(\frac{1}{\log \log n})}=n^{o(1)}$ \\
$\sigma(n)$ & Divisor sum function, $\sigma(n) \leq \frac{n^2}{\phi(n)}$, $\sigma(n) = O(n\log \log n)$ \\
$\sum_{p\leq n, p \text{ prime}}\frac{1}{p}$ & Sum of reciprocal of primes no larger than $n$, $O(\log \log n)$ \\
\hline
\end{tabular}
\caption{List of number theoretical notations and properties.} \label{fig:notations}
\end{center}
\end{figure*}

Our argument is centered around the \emph{relative totient function}, denoted as $\phi(n,\ell) = |\Z^*_n\cap [\ell]|$.

\begin{theorem}\label{thm:count}
Consider integers $0\leq \ell<n$, $y\in \Z_{n,d}$. The number of solutions $x\in \Z^*_n$ such that $xb\equiv y \pmod n$ for some $b\leq \ell$ is 
  \[
  \frac{\varphi(\frac{n}{d}, \floor{\frac{\ell}{d}})}{\varphi(\frac{n}{d})} \varphi(n).
  \]
\end{theorem}
\begin{proof}
See \Cref{sec:nproof}.
\end{proof}

\section{Number theoretical results}\label{sec:numbertheory}
This section we show some number theoretical bounds. The results are technical. The reader can skip the proofs of this section on first view. 

\subsection{Estimate for relative totient function}
This section proves a good estimate of $\phi(n,\ell)$ using sieve theory, the direction was hinted in \cite{252852}.

\begin{theorem}\label{thm:betterestimate}
There exists positive constant $c$, such that

\[
 \phi(n,\ell) = \begin{cases}
 \Omega(\frac{\ell}{n} \phi(n))& \text{ if } \ell > c \log^5 n\\
 \Omega(\frac{\ell}{\log \ell}) & \text{ if } \ell > c \log n\\
\end{cases}
\]

\end{theorem}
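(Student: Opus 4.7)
The plan is to apply Brun's sieve (\cref{thm:sieve}) with $\mathcal{A} = \set{1, 2, \dots, \ell}$, $\mathcal{P}$ the primes dividing $n$, and $\gamma(d) = 1$ for every squarefree $d$ composed of primes in $\mathcal{P}$. Then $|\mathcal{A}_d| = \floor{\ell/d} = \ell/d + R_d$ with $|R_d| \leq 1 = \gamma(d)$, so condition (1) holds; condition (2) holds with $A_1 = 2$ since $\gamma(p)/p = 1/p \leq 1/2$; and condition (3) holds with $\kappa = 1$ and some $A_2 = O(1)$ by Mertens's estimate $\sum_{p < z} (\log p)/p = \log z + O(1)$. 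With these inputs the sieve produces a lower bound on $S(\mathcal{A}, \mathcal{P}, z)$, the count of integers in $[1,\ell]$ free of prime factors of $n$ smaller than $z$.

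For the first bound, I would set $z := (\log n)^2$ and fix a small $\lambda \in (0,1)$ together with a small integer $b$ so that the multiplicative correction $1 - 2\lambda^{2b}e^{2\lambda}/(1 - \lambda^2 e^{2+2\lambda})$ is at least $1/2$, while the exponent $2b - 1 + 2.01/(e^{2\lambda/\kappa}-1)$ in the error remains an absolute constant. This yields
\[
S(\mathcal{A}, \mathcal{P}, z) \geq \tfrac{1}{2}\ell\, W(z) - O(z^{O(1)}).
\]
The key analytic step is to show $W(z) = \Theta(\phi(n)/n)$. Indeed $W(z) = \prod_{p \mid n,\; p < z}(1 - 1/p)$ differs from $\phi(n)/n$ by the factor $\prod_{p \mid n,\; p \geq z}(1 - 1/p)^{-1}$, and since $\omega(n) = O(\log n/\log \log n) \ll z$, this factor is $1 + O(\omega(n)/z) = 1 + o(1)$. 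Hence the main term is $\Omega(\ell\,\phi(n)/n) = \Omega(\ell/\log \log n)$, which under $\ell > c \log^5 n$ comfortably dominates the polylogarithmic error. To pass from $S(\mathcal{A}, \mathcal{P}, z)$ back to $\phi(n, \ell)$, I would subtract the integers $a \leq \ell$ that survive the sieve yet are divisible by some prime $p \mid n$ with $p \geq z$; this discrepancy is at most $\ell \sum_{p \mid n,\; p \geq z} 1/p \leq \ell\,\omega(n)/z = o(\ell\,\phi(n)/n)$, so it is absorbed.

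For the second bound, a cheaper prime-counting argument suffices and avoids any delicate sieve-parameter choice. Every prime $p$ with $p \leq \ell$ and $p \nmid n$ lies in $\Z_n^{*\leq \ell}$, so
\[
\phi(n, \ell) \geq \pi(\ell) - \omega(n) = \Theta\!\left(\tfrac{\ell}{\log \ell}\right) - O\!\left(\tfrac{\log n}{\log \log n}\right).
\]
For $c$ a sufficiently large constant and $\ell > c \log n$, the first term exceeds twice the second, yielding $\phi(n, \ell) = \Omega(\ell/\log \ell)$.

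The main obstacle is the parameter bookkeeping in Brun's sieve in the first regime: one must pick $\lambda$ small enough to make the $\lambda^{2b}$ correction harmless, yet keep $\lambda \log z$ large so that $\exp((2b+2)c_1/(\lambda \log z))$ is close to $1$, and simultaneously keep the exponent of $z$ in the error a small constant. Taking $\lambda$ a fixed small constant, $b$ a fixed small integer, and $z = (\log n)^2$ turns every one of these quantities into an absolute constant, and then the $\log^5 n$ slack in $\ell$ completely absorbs the error. Once that is set up, the reduction from $S(\mathcal{A},\mathcal{P},z)$ to $\phi(n,\ell)$ and the second-regime argument are straightforward.
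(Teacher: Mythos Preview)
Your approach is the paper's approach: the same sieve input $(\mathcal{A},\mathcal{P},\gamma)$, the same verification of conditions (1)--(3), the same split into ``small primes handled by the sieve, large primes of $n$ subtracted afterward,'' and the identical prime-counting argument for the $\ell>c\log n$ regime. The one quantitative slip is the choice $z=(\log n)^2$. For any integer $b\geq 1$ and any admissible $\lambda$ (the constraint $\lambda e^{1+\lambda}\leq 1$ forces $\lambda\lesssim 0.278$), the error exponent $2b-1+2.01/(e^{2\lambda}-1)$ is never below about $3.7$, so with $z=(\log n)^2$ the sieve error is at least $(\log n)^{7.4}$; this is \emph{not} absorbed by the main term $\Theta(\ell\,\phi(n)/n)=O(\ell/\log\log n)$ when $\ell$ is only $c\log^5 n$. (Also, taking $\lambda$ ``small'' moves the exponent in the wrong direction: as $\lambda\to 0$ the term $2.01/(e^{2\lambda}-1)\sim 1/\lambda$ blows up.) The paper resolves exactly this by taking $z=C\log n$ with $b=1$, $\lambda=2/9$, giving error $O((\log n)^{4.6})$, and this is what produces the threshold exponent $5$. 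With that single change of $z$, your argument goes through verbatim.
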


The proof can be found in \Cref{sec:betterestimate}. As a corollary, we prove a density theorem.
\begin{theorem}
\label{thm:betterunitcover}
There exists a constant $c$, such that for any $n$, and a divisor $d$ of $n$, if $\frac{\ell}{c \log^5 n} \geq d$, then each element in $\Z_{n,d}$ is covered $\Omega(\frac{n}{\ell}\phi(n))$ times by $\mathcal{S}_\ell(\Z^*_n)$. 
\end{theorem}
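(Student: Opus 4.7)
The plan is to reduce the statement directly to \cref{cor:count} and \cref{thm:betterestimate}. For any fixed $y\in\Z_{n,d}$, \cref{cor:count} already tells us that the number of pairs $(x,b)$ with $x\in\Z^*_n$, $b\le \ell$, and $xb\equiv y\pmod n$, equals
\[
\frac{\varphi(n/d,\lfloor \ell/d\rfloor)}{\varphi(n/d)}\,\varphi(n).
\]
Each such pair corresponds to $y$ being covered by the segment of slope $x\in\Z^*_n$, so this quantity is exactly the coverage count of $y$ by $\mathcal S_\ell(\Z^*_n)$. Thus the task reduces to lower bounding $\varphi(n/d,\lfloor \ell/d\rfloor)/\varphi(n/d)$.

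Next I would set $n':=n/d$ and $\ell':=\lfloor \ell/d\rfloor$ and apply Case 1 of \cref{thm:betterestimate} to $(n',\ell')$. The hypothesis of the present theorem is $\ell\ge c\, d\log^5 n$, so $\ell/d\ge c\log^5 n\ge c\log^5 n'$; picking $c$ a constant multiple of the constant in \cref{thm:betterestimate} (to absorb the rounding $\ell'\ge \ell/d-1$ and the gap $\log n'\le \log n$), we secure $\ell'\ge c_0\log^5 n'$ for the constant $c_0$ required by Case 1. That case yields $\varphi(n',\ell')=\Omega\!\left(\tfrac{\ell'}{n'}\varphi(n')\right)$, i.e.\ $\varphi(n',\ell')/\varphi(n')=\Omega(\ell'/n')$.

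Substituting back, the coverage count for $y$ is
\[
\frac{\varphi(n',\ell')}{\varphi(n')}\,\varphi(n)\;=\;\Omega\!\left(\frac{\ell'}{n'}\varphi(n)\right)\;=\;\Omega\!\left(\frac{\ell' d}{n}\varphi(n)\right)\;=\;\Omega\!\left(\frac{\ell}{n}\varphi(n)\right),
\]
where the last equality uses $\ell'd\ge \ell-d\ge \ell/2$ (valid once $\ell\ge 2d$, which is implied by the hypothesis $\ell\ge cd\log^5 n$ for $n$ large). This is exactly the claimed bound $\Omega\!\left(\tfrac{\ell}{n}\varphi(n)\right)$ on the coverage of each $y\in\Z_{n,d}$.

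I do not expect a serious obstacle here: the content of the theorem is essentially just the combination of the counting identity in \cref{cor:count} with the sieve bound in Case 1 of \cref{thm:betterestimate}. The only care needed is in the constant bookkeeping when passing from $(n,\ell,d)$ to $(n',\ell')$, namely verifying that the hypothesis $\ell\ge c d\log^5 n$ can be used to guarantee $\ell'>c_0\log^5 n'$ and simultaneously that $\ell'd=\Theta(\ell)$; both follow from choosing the constant $c$ in the present theorem to be a sufficiently large multiple of the constant from \cref{thm:betterestimate}.
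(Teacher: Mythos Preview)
Your proposal is correct and follows essentially the same approach as the paper: invoke \cref{cor:count} to express the coverage count as $\frac{\varphi(n/d,\lfloor\ell/d\rfloor)}{\varphi(n/d)}\varphi(n)$, then apply Case~1 of \cref{thm:betterestimate} to the pair $(n/d,\lfloor\ell/d\rfloor)$. Your version is in fact more careful than the paper's own proof about the constant bookkeeping (absorbing the floor and the passage from $\log n$ to $\log(n/d)$), and you correctly arrive at the bound $\Omega(\tfrac{\ell}{n}\varphi(n))$, which is what the paper's proof concludes as well (the ``$\tfrac{n}{\ell}$'' in the theorem statement is evidently a typo for ``$\tfrac{\ell}{n}$'').
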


\begin{proof}
By \Cref{thm:count}, the number of segments in $\mathcal{S}_\ell(\Z^*_n)$ covering some fixed element in $\Z_{n,d}$ is $\frac{\phi(n/d,\ell/d)}{\phi(n/d)}\phi(n)$. As long as $\ell$ is not too small, $\phi(n,\ell) = \Omega(\frac{\ell}{n}\phi(n))$. In particular, by \Cref{thm:betterestimate}, if $\lfloor \ell/d\rfloor \geq c \log^5(n/d)$, we have $\phi(n/d,\ell/d)/\phi(n/d)=\Omega(\frac{\ell}{n})$. Therefore, each element in $\Z_{n,d}$ is covered $\Omega(\frac{\ell}{n}\phi(n))$ times.
\end{proof}

\subsection{Large divisor with small divisor sum}
\begin{theorem}\label{thm:largel}
If $r = n^{O(\frac{1}{\log \log \log n})}$, then there exists $m|n$, such that $m\geq r$, 
$d(m)=r^{O(\frac{1}{\log \log r})}$ and $\sigma(m) = O(m)$. 
\end{theorem}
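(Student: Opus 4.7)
The plan is to construct $m$ directly as a product of prime powers of $n$, keeping $m \in [r, r^{O(1)})$, ensuring $\sum_{p\mid m} 1/p = O(1)$ (so that $\sigma(m)/m \leq \prod_{p \mid m}(1-1/p)^{-1} = \exp(O(1)) = O(1)$), and then deducing the divisor bound via the standard inequality $d(m) \leq m^{O(1/\log\log m)} = r^{O(1/\log\log r)}$ applied to the resulting $m$. Thus the goal collapses to finding $m \mid n$ with $\log m \in [\log r, O(\log r)]$ whose prime factors have bounded Mertens sum.

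If some prime power $p^a \| n$ satisfies $p^a \geq r$, take $m = p^a$. Then $\sigma(m)/m \leq p/(p-1) \leq 2$ and $d(m) = a + 1 \leq \log_2 n$, which is well within the required bound since $\log\log n \leq \log r/\log\log r$ in our regime. This is the easy case.

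Otherwise every prime power of $n$ is smaller than $r$, and several must be combined. Partition the primes dividing $n$ into Mertens levels $L_j = \{p \mid n : e^{e^{j-1}} \leq p < e^{e^j}\}$; by Mertens's theorem, each level satisfies $\sum_{p \in L_j} 1/p = O(1)$, and there are $O(\log\log n)$ levels carrying total mass $\sum_j V_j = \log n$, where $V_j = \sum_{p \in L_j} a_p \log p$. The idea is to find one level (or a constant number of adjacent ones) whose combined value is at least $\log r$, and then take $m$ to be a suitable sub-product of the corresponding prime powers, trimmed so that $\log m < O(\log r)$, yielding the desired divisor.

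The crux, and the main obstacle, is when level-averaging alone gives one level of value only $\geq \log n /\Theta(\log\log n)$, which is below $\log r$ in the permitted regime $\log r \leq \log n /\log\log\log n$. I expect to resolve this via a dichotomy: either the mass concentrates in a single level of value $\geq \log r$, or the primes of $n$ are so spread out in the log-log scale that the integrality constraint $a_p \geq 1$, combined with $\sum a_p \log p = \log n$, forces the high-level primes to grow double-exponentially, so any $O(\log\log n/\log\log\log n)$-sized collection of them has total $\sum 1/p$ still bounded (geometric-type decay dominates). Formalizing this dichotomy, and checking in each case that the resulting $m$ lies below $r^{O(1)}$ so that the standard divisor bound on $m$ transfers to the required bound in terms of $r$, is where the hypothesis $r = n^{O(1/\log\log\log n)}$ is used essentially.
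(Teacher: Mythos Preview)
Your reduction to finding $m\mid n$ with $\log m\asymp\log r$ and $\sum_{p\mid m}1/p=O(1)$ is exactly right, and from there $\sigma(m)=O(m)$ and $d(m)=r^{O(1/\log\log r)}$ follow as you say. Two problems remain.

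\textbf{Minor (easy case).} Taking the full prime power $m=p^a$ gives $d(m)=a+1\le\log_2 n$, and your claim that this is $r^{O(1/\log\log r)}$ implicitly assumes a \emph{lower} bound on $r$ in terms of $n$; the hypothesis $r=n^{O(1/\log\log\log n)}$ is only an upper bound. The fix is to take $m=p^{e'}$ with $e'$ minimal so that $p^{e'}\ge r$; then $e'=O(\log r)$ and $d(m)=O(\log r)=r^{O(1/\log\log r)}$ with no assumption on $r$.

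\textbf{Substantive (hard case).} The Mertens-level decomposition is natural, but the dichotomy you describe is the entire content of the proof and it is not established. Averaging over $\Theta(\log\log n)$ levels only produces one level of value $\ge \log n/\Theta(\log\log n)$, which in the allowed regime $\log r\asymp\log n/\log\log\log n$ forces you to combine $\Theta(\log\log n/\log\log\log n)$ levels; since each level contributes $\Theta(1)$ to the Mertens sum a priori, you get $\sum_{p\mid m}1/p=\Theta(\log\log n/\log\log\log n)$, not $O(1)$. Your proposed escape (take high levels, where primes are double-exponentially large) does not come with any guarantee that those high levels carry enough value: nothing prevents the value from concentrating in the low levels while the high levels are nearly empty. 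So as written there is a genuine gap.

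The paper bypasses levels entirely. It orders the prime powers $p_i^{e_i}$ of $n$ by the weight $w_i=e_ip_i\log p_i$ in decreasing order and sets $m=\prod_{i\le j}p_i^{e_i}$ for the minimal $j$ with $m\ge r$. The divisor bound is then immediate from $m/p_j^{e_j}<r$. For the Mertens bound, note that $w_i$ is precisely the ratio $(e_i\log p_i)/(1/p_i)$ of ``value'' to ``reciprocal''; a short weighted-mean argument shows that the quantity $c_1:=\bigl(\sum_{i\le j}e_i\log p_i\bigr)/\bigl(\sum_{i\le j}1/p_i\bigr)$ lies in $[\min_{i\le j}w_i,\max_{i\le j}w_i]$, and similarly $c_2$ for $i>j$, whence $c_1\ge w_j\ge w_{j+1}\ge c_2$. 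Bounding $c_2$ from below via $\sum_{i>j}e_i\log p_i\ge(\log n)/2$ and $\sum_{p\mid n}1/p=O(\log\log\log n)$ gives $c_2\ge \log n/O(\log\log\log n)$, and then
\[
\sum_{i\le j}\frac{1}{p_i}=\frac{\sum_{i\le j}e_i\log p_i}{c_1}\le\frac{2\log r}{c_2}=O(1),
\]
which is exactly where the hypothesis $\log r=O(\log n/\log\log\log n)$ enters. If you want to rescue your level-based outline, the same idea works: sort levels by the ratio $V_j/M_j$ (value over Mertens contribution) and take greedily; the identical averaging argument then bounds $\sum_{j\in S}M_j$. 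What does \emph{not} work is the ad hoc dichotomy you sketched.
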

\begin{proof}
If there is a single prime $p$, such that $p^e|n$ and $p^e\geq r$, then we pick $m = p^{e'}$, where $e'$ is the smallest integer such that $p^{e'}\geq r$. One can see $d(m) = e' = O(\log r) = r^{O(\frac{1}{\log\log r})}$, also $\varphi(m) = m(1-\frac{1}{p}) \geq \frac{m}{2}$, since $\varphi(m)\sigma(m)\leq m^2$ we are done.

Otherwise, we write $n=\prod_{i=1}^k p_i^{e_i}$, where each $p_i$ is a distinct prime number. The prime $p_i$ are ordered by the weight $w_i=e_ip_i\log p_i$ in decreasing order. That is $w_i\geq w_{i+1}$ for all $i$. Let $j$ be the smallest number such that $\prod_{i=1}^j p_i^{e_i}\geq r$. Let $m=\prod_{i=1}^j p_i^{e_i}$. 

First, we show $d(m)$ is small.
Let $m' = m/p_j^{e_j}$. 
One can see that $m'<r$ and $p_j^{e_j}<r$. So $e_j = O(\log r)$, and
\[
d(m) \leq (e_j+1) d(m') = O(\log r) d(m') = r^{O(\frac{1}{\log \log r})}.
\]

To show that $\sigma(m) = O(m)$, we show $\phi(m) = \Theta(m)$. Indeed, by $\sigma(m)\leq \frac{m^2}{\phi(m)}$, we obtain $\sigma(m)=O(m)$. For simplicity, it is easier to work with sum instead of products, so we take logarithm of everything and define $t=\log n$. By definition, $\log r = O(\frac{\log n}{\log \log \log n}) = O(\frac{t}{\log \log t})$ and $\sum_{i=1}^k e_i \log p_i = t$. 

Note $j$ is the smallest number such that $\sum_{i=1}^j e_i \log p_i \geq \log r$. Because there is no prime $p$ such that $p^e|n$ and $p^e\geq r$, we also have $\sum_{i=1}^j e_i \log p_i< 2\log r = O(\frac{t}{\log \log t})$.

Now, consider $e'_1,\ldots,e'_k$, such that the following holds.

\begin{itemize}
\item $\sum_{i=1}^j e_i\log p_i = \sum_{i=1}^j e_i' \log p_i$, and $e_i'p_i \log p_i = c_1$ for some $c_1$, when $1\leq i \leq j$,
\item $\sum_{i=j+1}^k e_i\log p_i = \sum_{i=j+1}^n e_i' \log p_i$, and $e_i'p_i \log p_i = c_2$ for some $c_2$, where $j+1\leq i \leq k$.
\end{itemize}

Note $c_1$ and $c_2$ can be interpreted as weighted averages over $w_i$. Indeed, consider sequences $x_1,\ldots,x_n$ and $y_1,\ldots,y_n$, such that $\sum_{i}x_i = \sum_{i}y_i$. If for some non-negative $a_1,\ldots,a_n$, we have $a_iy_i=c$ for all $i,j$, then $c \leq \max_{i}a_ix_i$. Indeed, there exists $x_j\geq y_j$, so $\max_{i}a_ix_i \geq a_jx_j\geq a_jy_j=c$. Similarly, $c\geq \min_{i}a_ix_i$. This shows $c_1\geq c_2$, because $c_2\leq \max_{i=j+1}^k w_i = w_{j+1} \leq w_j = \min_{i=1}^j w_i\leq c_1$. 

We first give a lower bound of $c_2$.

$\sum_{i=j+1}^k \frac{c_2}{p_i} = \sum_{i=j+1}^k e'_i\log p_i = \sum_{i=j+1}^k e_i\log p_i \geq t-O(\frac{t}{\log \log t}) = \Omega(t)$.

$\sum_{i=j+1}^k \frac{c_2}{p_i}\leq c_2 \sum_{i=1}^k \frac{1}{p_i}\leq c_2 \sum_{p\text{ prime}, p=O(t)} \frac{1}{p} = c_2 O(\log \log t)$.

This shows $c_2 O(\log \log t) = \Omega(t)$, or $c_2=\Omega(\frac{t}{\log \log t})$.

Since $c_1\geq c_2$,
$\sum_{i=1}^j \frac{1}{p_i} = \sum_{i=1}^j \frac{e_i'\log p_i}{c_1} =\frac{O(\frac{t}{\log \log t})}{c_1} \leq \frac{O(\frac{t}{\log \log t})}{c_2} = \frac{O(\frac{t}{\log \log t})}{\Omega(\frac{t}{\log\log t})} = O(1)$.

Note $\phi(m) = m \prod_{i=1}^j (1-\frac{1}{p_i})$. Because $-2x < \log(1-x) < -x$ for $0\leq x\leq 1/2$, so $\sum_{i=1}^j \log(1-\frac{1}{p_i})\geq -2\sum_{i=1}^j \frac{1}{p_i} = -O(1)$. Hence $\prod_{i=1}^j (1-\frac{1}{p_i}) = \Omega(1)$, and $\phi(m) = \Omega(m)$.
\end{proof}

A interesting number theoretical result is the direct corollary of \Cref{thm:largel}.
\begin{corollary}\label{cor:independentinterest}
Let $n$ be a positive integer, there exists a $m|n$ such that $m = n^{\Omega(\frac{1}{\log \log \log n})}$ and $\sigma(m)=O(m)$.
\end{corollary}
%It would be interesting to know if the above corollary is tight.

%One might question if the above colloary is tight enough? $\sum_{d|m} \frac{n}{d} =O(n)$. Unfortunately, if $m$ is not chosen carefully, this is not possible. For example, take $n=p_i^{e_i}$, where $p_i$ is the $i$th prime number, $e_i=1$ if $i\geq 2$, and [$e_1=$ is BLAH]. Certainly, this shows $\sum_{d|m} \frac{n}{d} = \Omega(n\log \log m)$.

\subsection{Covering of $\D_n$}
Recall that $(\D_n,\odot)$ is the semigroup over the set of divisors of $n$, and the operation $\odot$ is defined as $a\odot b = \gcd(ab,n)$. Throughout this section, we fix a $s\leq n$, and let $A:=\D_n \cap [s]$.
We are interested in finding $s$-coverings of $\D_n$, that is, finding $B\subseteq \D_n$ such that $(\D_n\cap [s])\odot B=\D_n$. 
As we mentioned previously, the main goal is to show that a good $s$-covering of $\D_n$ lifts to a $\ell$-covering of $\Z_n$ of small size. The criteria for a good $s$-covering $B$ is two folds: the size of $B$ should be small ($O(\frac{n}{s} \frac{1}{\log^c n})$), and the reciprocal sum of $B$, namely $\sum_{d\in B}\frac{1}{d}$ should also be small ($O(1)$). However, one can't hope to optimize both at the same time. Fortunately, for our application, we only need the reciprocal sum to be small when $s$ is small.

To obtain a $s$-covering of $\D_n$, there are two natural choices of $B$.
\begin{enumerate}
\item Let $B=(\D_n\setminus [s]) \cup \{1\}$. If $d\leq s$, then $d=d\cdot 1$. Otherwise, if $d > s$, then $d=1\cdot d$. Hence, $A \odot B = \D_n$.
\item Let $B=\D_m$ for some $m|n$ and $m\geq \frac{n}{s}$. We also have $A\odot B=\D_n$. Indeed, consider divisor $d$ of $n$, let $d_1 = \gcd(m,d) \in B$, and $d_2 = d/d_1$. $d_2 | \frac{n}{m} \leq s$, so $d_2\in A$.
\end{enumerate}

These two choices is sufficient for us to prove the following lemma. The lemma basically states there is an $s$-covering of $\D_n$ fits our requirement as long as $s$ is not too large. 

\begin{lemma}\label{lem:B}
Let $\delta$ be a function such that $\delta(n)=\Omega(\log n)$ and $\delta(n)=O(\log^{c'} n)$ for some constant $c'$. 
There exists a constant $c$, such that for every $s\leq \frac{n}{\delta(n)}$, we can find $B\subset \D_n$ such that $(\D_n \cap [s]) \odot B = \D_n$, $|B| = O(\frac{n\log n}{s \delta(n)})$ and 
\begin{enumerate}
\item If $s\in (0,n^{\frac{c}{\log \log n}}]$, then $\sum_{d\in B}\frac{1}{d} = O(\log \log n)$.
\item If $s\in (n^{\frac{c}{\log \log n}},\frac{n}{\delta(n)}]$, then $\sum_{d\in B}\frac{1}{d} = O(1)$.
\end{enumerate}
\end{lemma}

See the proof in \Cref{sec:lembproof}.

% One can obtain a tighter bound by replacing $n^{\frac{c}{\log \log n}}$ in the range of case 1 and 2 with $e^{\frac{c \log \log n\log \log \log n}{\log \log \log \log n}}$ \cite{302899}, but it does not impact our main result.

\section{$\ell$-covering}\label{sec:bounds}
In this section, we prove our bounds in $f(n,\ell)$ and provide a quick randomized construction. 

\subsection{Upper bound}

The high-level idea is to divide the problem into sub-problems of covering multiple $\Z_{n,d}$. Can we cover $\Z_{n,d}$ for many distinct $d$, using only a few segments in $\mathcal{S}_\ell(\Z^*_n)$? We affirmatively answer this question by connecting an $s$-covering of $\D_n$ to an $\ell$-covering of $\Z_n$. For the remainder of this section, we define $s = \max\left(1,\frac{\ell}{c\log^5 n}\right)$, where $c$ is the constant present in \Cref{thm:betterestimate}. Let $B \subseteq \D_n$ be any $s$-covering of $\D_n$. For each $b\in B$, we generate a cover of all $\bigcup_{d\leq s} \Z_{n,b \odot d}$ using $\mathcal{S}_\ell(\Z_{n,b})$. We denote $g(n,\ell)$ as the size of the smallest set cover of $\bigcup_{d|n,d\leq s} \Z_{n,d}$ using $\mathcal{S}_\ell(\Z^*_n)$. We obtain that
\[
f(n,\ell) \leq \sum_{b\in B} g(\frac{n}{b},\ell).
\]

We provide a bound for $g(n,\ell)$, leveraging the fact that each element is covered multiple times, and \Cref{thm:bettergreedy}, which is the upper bound from the combinatorial set cover theorem.

\begin{theorem}\label{thm:g}
There exists a constant $c>0$, such that
\[g(n,\ell) = \begin{cases}
              O(\frac{n}{\ell}\log \ell) & \text{ if } \ell\geq c \log^5 n,\\
              O(\frac{\phi(n)}{\ell}\log^2 \ell) & \text{ if } c\log^5 n>\ell\geq c \log n.
              \end{cases}\]
\end{theorem}
\begin{proof}
By \Cref{thm:count}, The number of times an element in $\Z_{n,d}$ get covered by a segment in $\mathcal{S}_\ell(\Z^*_n)$ is $\frac{\phi(\frac{n}{d},\floor{\frac{\ell}{d}})}{\phi(\frac{n}{d})} \phi(n)$. We consider $2$ cases. 

Case 1. $\ell>c\log^5 n$. Consider a $d|n$ and $d\leq s$. Then $\lfloor \frac{\ell}{d} \rfloor = \Omega(\log^5 n)$. Hence, $\phi(\frac{n}{d},\lfloor \frac{\ell}{d} \rfloor) = \Omega(\frac{\floor{\frac{\ell}{d}}}{\frac{n}{d}}\phi(\frac{n}{d})) = \Omega(\frac{\ell}{n} \phi(\frac{n}{d}))$ by \Cref{thm:betterestimate}. 
Therefore, each element in $\Z_{n,d}$ is covered by $\frac{\phi(\frac{n}{d},\floor{\frac{\ell}{d}})}{\phi(\frac{n}{d})} \phi(n) = \Omega(\frac{\ell}{n}\phi(n))$ segments in $\mathcal{S}_\ell(\Z^*_n)$. This is true for all element in $\bigcup_{d|n,d\leq s} \Z_{n,d}$.

By \Cref{thm:bettergreedy}, there exists a cover of size 

\[
g(n,\ell) = O\left(\frac{\phi(n)\log \ell}{\frac{\ell}{n}\phi(n)}\right) = O\left(\frac{n}{\ell}\log \ell\right).
\]

Case 2. If $c \log^5 n>\ell\geq c\log n$, then $s=1$, and we try to cover $\Z^*_n$ with $\mathcal{S}_\ell(\Z^*_n)$. Each element is covered by $\frac{\phi(n,\ell)}{\phi(n)}\phi(n) = \Omega(\frac{\ell}{\log \ell})$ segments. By \Cref{thm:bettergreedy}, we have  
\[
g(n,\ell) = O\left(\frac{\phi(n)\log \ell}{\frac{\ell}{\log \ell}}\right) = O\left(\frac{\phi(n)}{\ell}\log^2 \ell\right).\]
%Note this requires the fact that $\log^2 \ell = \Omega((\log \log n)^2) \geq \frac{n}{\phi(n)}=O(\log \log n)$.
\end{proof}

We are ready to prove our main theorem.

\begin{theorem}[Main]\label{thm:main}
There exists an $\ell$-covering set of size $O(\frac{n}{\ell}\log n)$ for all $n, \ell$ where $\ell<n$.
\end{theorem}
\begin{proof}
Let $B$ be the $s$-covering of $\D_n$ in \Cref{lem:B} with $\delta(n) = c\log^5 n$. Observe $s=\frac{\ell}{\delta(n)}$ and $|B| = O(\frac{n}{\ell}\log n)$. 

\paragraph*{Case 1}
If $\ell<c\log n$, then we are done, since $f(n,\ell)\leq n = O(\frac{n}{\ell}\log n)$.

\paragraph*{Case 2} Consider $c\log n\leq \ell \leq c\log^5 n$.
\[
\begin{aligned}
f(n,\ell) &\leq \sum_{d\in B}g(\frac{n}{d},\ell)\\
          &\leq \sum_{d\in B} \left(\phi(n/d) \frac{(\log \ell)^2}{\ell} + 1\right)\\
          &\leq O(\frac{n}{\ell} \log^2 \ell) + |B|\\
          &= O\left(\frac{n}{\ell} (\log \log n)^2\right) + O\left(\frac{n}{\ell}\log n\right)\\
          &=O\left(\frac{n}{\ell}\log n\right)
\end{aligned}
\]

\paragraph*{Case 3}
Consider $\ell > c\log^5 n$.

\[
\begin{aligned}
f(n,\ell) &\leq \sum_{d\in B}g(\frac{n}{d},\ell)\\
          &\leq \sum_{d\in B} O\left(\frac{n}{d} \frac{\log \ell}{\ell}\right) + 1\\
          &=|B|+ O\left(\frac{n\log\ell}{\ell}\right)\sum_{d\in B} \frac{1}{d}\\
          &=O\left(\frac{n}{\ell}\log n\right) + O\left(\frac{n\log\ell}{\ell}\right)\sum_{d\in B} \frac{1}{d}\\
\end{aligned}
\]

Hence, we are concerned with the last term. We further separate into 2 cases:

\paragraph*{Case 3.1} If $\ell < n^{\frac{c}{\log \log n}}$, then $\sum_{d\in B}\frac{1}{d} = O(\log \log n)$, and

\[
\begin{aligned}
O\left(\frac{n\log\ell}{\ell}\sum_{d\in B}\frac{1}{d}\right) &= O\left(\frac{n\log\ell}{\ell}\log \log n\right)\\
          &= O\left(\frac{n\frac{\log n}{\log \log n}\log \log n}{\ell}\right)\\
          &= O\left(\frac{n\log n}{\ell}\right).
\end{aligned}
\]

\paragraph*{Case 3.2} $\ell \geq n^{\frac{c}{\log \log n}}$, then $\sum_{d\in B}\frac{1}{d} = O(1)$. Hence,
\[
\begin{aligned}
O\left(\frac{n\log\ell}{\ell}\sum_{d\in B} \frac{1}{d}\right) = O\left(\frac{n\log \ell}{\ell}\right)=O\left(\frac{n\log n}{\ell}\right).
\end{aligned}
\]

In all cases, we obtain an $\ell$-covering of $O(\frac{n\log n}{\ell})$ size.
\end{proof}

The derived upper bound naturally gives rise to a construction algorithm. Firstly, we find the prime factorization in $n^{o(1)}$ time, and then compute the desired $B$ in $n^{o(1)}$ time. Subsequently, we cover each $\bigcup_{d|n/b, d\leq s}\Z_{n/b,d}$ using $\mathcal{S}_\ell(\Z_{n/b}^*)$ for each $b\in B$. If we apply the linear time greedy algorithm for set cover, then the running time becomes $O(n\ell)$ \cite{Koili.X2019}.

A randomized constructive variant of \Cref{thm:bettergreedy} can also be employed.

\begin{theorem}\label{thm:randomizedrounding}
Let there be $t$ sets, each element of the size $n$ universe is covered by at least $b$ of the sets, then there exists subset of $O(\frac{t}{b}\log n)$ size that covers the universe, and can be found with high probability using a Monte Carlo algorithm that runs in $\tilde{O}(\frac{t}{b})$ time.
\end{theorem}
\begin{proof}[Sketch]
The condition demonstrates that the standard linear programming relaxation of set cover provides a feasible solution, where every indicator variable for each set holds the value of $\frac{1}{b}$. The conventional randomized rounding algorithm, which independently selects each set with a probability equal to $\frac{1}{b}$ for $\Theta(\log n)$ rounds, will cover the universe with high probability \cite{Vazir.2001}. This can be simulated by independently sampling sets of size $\frac{t}{b}$ for $\Theta(\log n)$ rounds, a process that can be completed in $\tilde{O}(\frac{t}{b})$ time.
\end{proof}

The main discrepancy between \Cref{thm:randomizedrounding} and \Cref{thm:bettergreedy} lies in the coverage size. Let $a$ represent the maximum size of each set, the randomized algorithm has a higher factor of $\log n$ rather than $\log a$. If we incorporate more sophisticated rounding techniques, we can once again attain $\log a$ \cite{Srini.1999}. However, the algorithm will slow down. The change from $\log a$ to $\log n$ has implications for the output size. Specifically, following the proof of \Cref{thm:main}, there will be an additional $\log \log n$ factor in the size of the cover.

The analysis mirrors the previous one, enabling us to derive the following theorem.

\begin{theorem}\label{thm:randconstruction}
There exists a constant $c$, such that in $\tilde{O}(\frac{n}{\ell})+n^{o(1)}$ time with high probability, a $\ell$-covering $B$ of $\Z_n$ can be found, such that
\begin{enumerate}
  \item  $|B| = O(\frac{n}{\ell}\log n)$ if $\ell < n^{\frac{c}{\log \log n}}$,
  \item  $|B| = O(\frac{n}{\ell}\log n\log \log n)$ otherwise.
\end{enumerate}
\end{theorem}

\subsection{Lower bound}
We note that our upper bound is optimal through the combinatorial set covering property (\Cref{thm:bettergreedy}). The $\log n$ factor cannot be avoided when $\ell = n^{\Omega(1)}$. To obtain a superior bound, stronger \emph{number theoretical properties} must be leveraged, as was the case when $n$ is a prime \cite{Chen.S.W2013}.

We demonstrate that it is improbable to acquire significantly stronger bounds when \emph{$\ell$ is small}. For an infinite number of $(n,\ell)$ pairs, our bound is merely a $\log \log n$ factor away from the lower bound.

\begin{theorem}
There exists a constant $c>0$, for which there are an infinite number of $n,\ell$ pairs where
$f(n,\ell) \geq c \frac{n}{\ell} \frac{\log n}{\log \log n}$.
\end{theorem}
\begin{proof}
Let $n$ be the product of the smallest $k$ prime numbers, then $k=\Theta(\frac{\log n}{\log \log n})$. Let $\ell$ be the smallest number where $\pi(\ell) = k$. Given that $\pi(\ell) = \Theta(\frac{\ell}{\log \ell})$, we know that $\ell = \Theta(\log n)$.

Note that $\phi(n,\ell)=1$. Indeed, every number $\leq \ell$ except $1$ has a common factor with $n$. To cover all elements in $\Z^*_n\subset \Z_n$, the $\ell$-covering size must be at least $\frac{\phi(n)}{\phi(n,\ell)} = \phi(n) = \Omega(\frac{n}{\log \log n}) = \Omega(\frac{n}{\ell} \frac{\log n}{\log \log n})$.
\end{proof}

\subsection{Application: Simplifying modular subset sum computation}
We demonstrate how our improved bound of $\ell$-covering can be advantageous in algorithm design. $\ell$-covering offers a natural divide-and-conquer algorithm; by partitioning elements into segments in the $\ell$-covering, solving the subproblem, and then combining them together. Such an approach was employed in modular subset sum computations. The modular subset sum problem is defined as follows: Given $S\subset \Z_n$ with $|S|=m$, output all values $i$ such that $\sum_{x\in T} x = i$ for some $T\subset S$.

To solve the modular subset sum, the following theorem was established:
\begin{theorem}[{{\cite[Lemma 5.2]{Koili.X2019}}}]\label{thm:sss}
Let $S\subset \Z_n$ be a set of size $m$, and it can be covered by $k$ segments of length $\ell$, then the subset sums of $S$ can be computed in $O(kn\log n + m\ell \log (m\ell) \log m)$ time.
\end{theorem}

Utilizing the previous $\ell$-covering bound of $O(\frac{n^{1+o(1)}}{\ell})$, a direct application would lead to an $O(\sqrt{m}n^{1+o(1)})$ time algorithm. Instead, in \cite{Koili.X2019}, using a much more intricate recursive partitioning, coupled with a second-level application of \Cref{thm:sss}, Koiliaris and Xu obtained an $O(\sqrt{m}n\log^2 n)$ time algorithm.

Armed with our improved bound on $\ell$-covering, we know $k=O(\frac{n}{\ell}\log n)$. Therefore, setting $\ell = \frac{n}{\sqrt{m}}$, we directly obtain a running time of $O(\sqrt{m}n\log^2 n)$ from \Cref{thm:sss}, matching the significantly more complicated algorithm.

It's worth noting that $\tilde{O}(n)$ time algorithms that completely avoid $\ell$-covering have been discovered \cite{doi:10.1137/1.9781611974782.69,jin_et_al:OASIcs:2018:10043,10.5555/3310435.3310439,doi:10.1137/1.9781611976496.6,potepa:LIPIcs.ESA.2021.76}. However, we continue to believe that $\ell$-covering can provide advantages in future algorithmic applications.
\bibliography{cyclic_cover}

\appendix
\section{Appendix}
\subsection{Proof of \Cref{thm:count}}\label{sec:nproof}

We first show a simple lemma. 
\begin{lemma}\label{lem:tau}
Let $y\in\Z^*_n$, and $B\subset \Z_{n}^*$.
The number of $x\in \Z_{dn}^*$ such that $xb\equiv y \pmod n$, and $b\in B$ is $|B|\frac{\phi(dn)}{\phi(n)}$.
\end{lemma}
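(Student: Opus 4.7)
The plan is to reduce the problem to a single counting question about the multiplicative reduction map $\pi \colon \Z_{dn}^* \to \Z_n^*$, $x \mapsto x \bmod n$, and then use that it is a surjective group homomorphism.

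First I would fix some $b \in B$. Since $b \in \Z_n^*$ is a unit modulo $n$, the condition $xb \equiv y \pmod n$ is equivalent to $x \equiv yb^{-1} \pmod n$. Writing $c_b = yb^{-1}$, this is an element of $\Z_n^*$. As $b$ ranges over $B$, the elements $c_b$ are pairwise distinct (the map $b \mapsto yb^{-1}$ is a bijection on $\Z_n^*$), so the associated residue classes modulo $n$ are disjoint. Hence the total count factors as $|B|$ times the number of $x \in \Z_{dn}^*$ lying in a single prescribed residue class $c \in \Z_n^*$ modulo $n$, and it suffices to show this latter count equals $\phi(dn)/\phi(n)$.

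Second I would observe that $\pi$ is well defined (since $\gcd(x,dn)=1$ implies $\gcd(x,n)=1$) and is a homomorphism of finite abelian multiplicative groups. If $\pi$ is surjective, then all its fibers have the same cardinality $|\ker \pi| = |\Z_{dn}^*| / |\Z_n^*| = \phi(dn)/\phi(n)$, and we are done.

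The only genuinely non-trivial step is therefore the surjectivity of $\pi$, which is where I would expect to spend the most care. Given $c \in \Z_n^*$, I would lift $c$ to an integer and then use the Chinese remainder theorem on the decomposition $dn = n \cdot d'$ together with adjustments at primes dividing $\gcd(d,n)$ to produce an integer $x$ that reduces to $c$ modulo $n$ and is coprime to every prime dividing $dn$. Equivalently, one can argue directly by examining the arithmetic progression $c, c+n, \dots, c+(d-1)n$ modulo $dn$: every term is automatically coprime to $n$, and a standard inclusion-exclusion over the primes dividing $d$ but not $n$ shows that at least one term is coprime to $d$ as well. Either route produces the required preimage, finishing the proof.
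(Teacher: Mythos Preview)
Your proposal is correct and follows essentially the same route as the paper: reduce to a single $b$ via $c_b = yb^{-1}$, note the disjointness over $b\in B$, and then count the elements of $\Z_{dn}^*$ in a fixed residue class $c\in\Z_n^*$ modulo $n$. The only difference is packaging: the paper computes that fiber size directly by inclusion--exclusion over the primes dividing $dn$ but not $n$ (obtaining exactly $d\prod_{p\in P_d'}(1-1/p)=\phi(dn)/\phi(n)$), whereas you phrase it as ``$\pi$ is a surjective homomorphism, hence every fiber has size $|\ker\pi|$'' and then sketch surjectivity. Your second suggested route for surjectivity---inclusion--exclusion on the progression $c,c+n,\dots,c+(d-1)n$---is precisely the paper's computation, which in fact yields the exact fiber count rather than just positivity, so the separate surjectivity step becomes unnecessary once you carry it out.
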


\begin{proof}
Indeed, the theorem is equivalent to finding the number of solutions to $x\equiv yb^{-1} \pmod n$ where $b\in B$. For a fixed $b$, let $z=yb^{-1}$. We are asking for the number of $x\in \Z^*_{dn}$ such that $x\equiv z \pmod n$.
Consider the set $A=\set{z+kn \mid 0\leq k\leq d-1}$. Let $P_{n}$ be the set of distinct prime factors of $n$. Since $\gcd(z,n)=1$, no element in $P_n$ can divide any element in $A$. Let $P_{dn}\setminus P_{n}=P_d'\subseteq P_{d}$. Let $q$ be the product of some elements in $P_d'$, $q|d$, $(q,n)=1$. Let $A_q = \set{a \mid a\in A, q|a}$. Note that $q|z+kn \Leftrightarrow k\equiv -zn^{-1} \pmod q$, and given $0\leq k\leq d-1$ and $q|d$, it follows that $|A_q|=\frac{d}{q}$.\\
We can use the principle of inclusion-exclusion to count the elements $a\in A$ such that $\gcd(a,dn)=1$:
\[\sum_{i=0}^{|P_d'|}(-1)^{i}\sum_{S\subseteq P_d',|S|=i}|A_{\prod_{p\in S}p}|=\sum_{i=0}^{|P_d'|}(-1)^{i}\sum_{S\subseteq P_d',|S|=i}\frac{d}{\prod_{p\in S}p}=d\prod_{p\in P_d'}(1-\frac{1}{p})=\frac{\varphi(dn)}{\varphi(n)}.\]
Since all the solution sets of $x$ for different $b\in B$ are disjoint, we find that the total number of solutions over all $B$ is $|B|\frac{\phi(dn)}{\phi(n)}$.
\end{proof}

Now we are ready to prove the theorem.
Since $x\in \Z_n^*$, we observe that $xb\equiv y \pmod n$ if and only if $d|b$, $x\frac{b}{d}\equiv \frac{y}{d} \pmod{\frac{n}{d}}$, and $\frac{b}{d} \leq \floor{\frac{\ell}{d}}$. We can then apply \Cref{lem:tau} and obtain that the number of solutions is $\phi(n/d,\floor{\ell/d})\phi(n)/\phi(n/d)$.

\subsection{Proof of \Cref{lem:B}}\label{sec:lembproof}

\begin{proof}

Let $A=\D_n \cap [s]$. We let $B_1=(\D_n \setminus [s])\cup \{1\}$. Also, let $B_2=\D_m$, where $m|n$, $d(m) = \frac{n}{s}^{O(\frac{1}{\log \log \frac{n}{s}})}$, $\sigma(m)=O(m)$. Such $m$ exists when $s = n^{1-O(\frac{1}{\log \log \log n})}$ by setting $r=\frac{n}{s}$ in \Cref{thm:largel}. Recall both $A\odot B_1=\D_n$ and $A\odot B_2=\D_n$.

The proof consists of $3$ different cases. 
\begin{enumerate}
    \item $s\in (0,n^{\frac{c}{\log \log n}}]$.
    \item $s\in (n^{\frac{c}{\log \log n}},n^{1-\frac{c}{\log \log n}}]$
    \item $s\in (n^{1-\frac{c}{\log \log n}},\frac{n}{f(n)}]$
\end{enumerate}
For the first two cases, we let $B=B_1$.

In particular, we have $s\leq n^{1-\frac{c}{\log \log n}}$, so $\frac{n\log n}{sf(n)} =O(n^\frac{c-\epsilon}{\log \log n})$ for any $\epsilon>0$. Now if we pick sufficiently large $c$, we would have $|B|=d(n) = n^{O(\frac{1}{\log \log n})} = O(\frac{n\log n}{sf(n)})$. 

When $s\in (0,n^{\frac{c}{\log \log n}}]$, $\sum_{d\in B}\frac{1}{d} \leq \frac{1}{n} \sum_{d|n} \frac{n}{d}= \sigma(n)/n = O(\log\log n)$. Otherwise, when $s\in (n^{\frac{c}{\log \log n}},n^{1-\frac{c}{\log \log n}}]$, each element in $B\setminus \{1\}$ is at least $s$, so we know that $\sum_{d\in B} \frac{1}{d} = 1+\sum_{d\in B\setminus \{1\}} \frac{1}{d} \leq 1+|B|\frac{1}{s}\leq 1+\frac{n^{\frac{O(1)}{\log \log n}}}{n^{\frac{c}{\log \log n}}} = O(1)$.

Now, we consider the third case $s\in (n^{1-\frac{c}{\log \log n}},\frac{n}{f(n)}]$. In this case we set $B=B_2$.

We first bound the size of $B$. 
\[
\begin{aligned}
|B| &= (\frac{n}{s})^{O(\frac{1}{\log \log \frac{n}{s}})}\\
          &\leq (\frac{nf(n)}{sf(n)})^{O(\frac{1}{\log \log f(n)})}\\
          &\leq O(\frac{n}{sf(n)}) f(n)^{O(\frac{1}{\log \log f(n)})} \\
          &\leq \frac{n}{sf(n)}(\log^{c'} n)^{O(\frac{1}{\log \log \log n})}\\
          &= O(\frac{n\log n}{sf(n)})\\
\end{aligned}
\]

By the choice of $m$, we have $\sum_{d\in B} \frac{1}{d} = \frac{\sigma(m)}{m} = O(1)$.
\end{proof}

\subsection{Proof of \Cref{thm:betterestimate}}\label{sec:betterestimate}

We first state Brun's sieve.

\begin{theorem}[Brun's sieve \textup{\cite[p.93]{cojocaru2005introduction}} ]\label{thm:sieve}
    
 Let \(\mathcal{A}\) be any set of natural number \(\leq x\) (i.e.
 \(\cal{A}\) is a finite set) and let \(\mathcal{P}\) be a set of primes. For each prime \(p\in\mathcal{P}\), Let \(\mathcal{A}_p\) be the set of elements of \(\mathcal{A}\) which are divisible by \(p\). Let \(\mathcal{A}_1:=A\) and for any squarefree positive integer \(d\) composed of primes of \(\mathcal{P}\) let \(\mathcal A_d:=\cap_{p|d}A_p\). Let \(z\) be a positive real number and let \(P(z):=\prod_{p\in\mathcal{P},p<z}p\).\\
 We assume that there exist a multiplicative function \(\gamma(\cdot)\) such that, for any \(d\) as above,
 \[|\mathcal{A}_d|=\frac{\gamma(d)}{d}X+R_d\]
 for some \(R_d\), where $X:=|A|.$
 We set \[S(\mathcal{A},\mathcal{P},z):=|\mathcal{A}\backslash\cup_{p|P(z)}\mathcal{A}_p|=|\{a:a\in\mathcal{A},\gcd(a,P(z))=1\}|\] and 
 \[W(z):=\prod_{p|P(z)}(1-\frac{\gamma(p)}{p}).\]
 Supposed that\\
 1.\(|R_d|\leq\gamma(d)\) for any squarefree \(d\) composed of primes of \(\mathcal{P}\);\\
 2.there exists a constant \(A_1\geq1\) such that
 \[0\leq\frac{\gamma(p)}{p}\leq 1-\frac{1}{A_1};\]\\
 3.there exists a constant \(\kappa\geq0\) and \(A_2\geq1\) such that
 \[\sum_{w\leq p<z}\frac{\gamma(p)\log p}{p}\leq\kappa\log\frac{z}{w}+A_2\quad\text{if}\quad 2\leq w \leq z.\]
 4.Let \(b\) be a positive integer and let \(\lambda\) be a real number satisfying
 \[0\leq\lambda e^{1+\lambda}\leq 1.\]
 Then 
 \[\begin{aligned}
    S(\mathcal{A},\mathcal{P},z)\geq &XW(z)\{1-\frac{2\lambda^{2b}e^{2\lambda}}{1-\lambda^2 e^{2+2\lambda}}\exp((2b+2)\frac{c_1}{\lambda\log z})\}\\
    &+O(z^{2b-1+\frac{2.01}{e^{2\lambda/\kappa}-1}}),
 \end{aligned}\]
 where \[c_1:=\frac{A_2}{2}\{1+A_1(\kappa+\frac{A_2}{\log 2})\}.\]\\
 % Remark: Review the proof of the theorem, if \(\kappa,b,\lambda,A_1,A_2\) can be fixed, then the constant implied by \(O\)-notation is fixed.
\end{theorem}

Next, we proceed with the proof.

\begin{proof}
\noindent \textit{Case 1.} $\ell > c \log^5 n$.

Let \(z\) be a value to be determined later. Let \(n_0 = \prod_{p|n,p< z}p\). Observe that \(\phi(n,\ell)\) and \(\phi(n_0,\ell)\) are close. Indeed, for some $c_1>0$,  
\[
\begin{aligned}
    |\phi(n,\ell)-\phi(n_0,\ell)|&=\abs{\sum_{0\leq{m}\leq{\ell},(m,n_0)=1}1-\sum_{0\leq{m}\leq{\ell},(m,n)=1}1}\\
    &\leq \sum_{1\leq{m}\leq{\ell}:p|n,p\geq z,p|m}1\\
    &\leq \sum_{p|n,p\geq z}\frac{\ell}{p}\\
    &\leq \frac{\ell\omega(n)}{z}\\
    &\leq \frac{c_1\ell\log{n}}{z\log \log n}
\end{aligned}
\]

Now, we want to estimate \(\phi(n_0,\ell)\) using the Brun's sieve. The notations are from the theorem. 
Let $\mathcal{A}=\{1,2,\ldots,\ell\}$, $\mathcal{P}=\{p:p|n\}$, $X=|\mathcal{A}|=\ell$, the multiplicative function $\gamma$, where \(\gamma(p)=1\) if \(p\in\mathcal{P}\) otherwise \(0\). 
\begin{itemize}

    \item \textit{Condition (1).} For any squarefree \(d\) composed of primes of \(\mathcal{P}\), 
    \begin{equation*}
        \begin{aligned}
            |R_d| &=\abs{\floor{\frac{\ell}{p}} -\frac{\ell}{p}} \leq 1 = \gamma(d).
        \end{aligned}
    \end{equation*}
    
    \item \textit{Condition (2).} We choose \(A_1\) = 2, therefore \(0\leq \frac{\gamma(p)}{p}=\frac{1}{p}\leq \frac{1}{2} = 1-\frac{1}{A_1}\).
    
    \item \textit{Condition (3).} Because \(R(x):=\sum_{p<x}\frac{\log p}{p}=\log x + O(1)\) \cite{cojocaru2005introduction}, we have 
    \[\sum_{w\leq p<z}\frac{\gamma(p)\log{p}}{p} \leq \sum_{w\leq p<z}\frac{\log{p}}{p} = R(z)-R(w)=\log{\frac{z}{w}}+O(1).\]
    We choose \(\kappa=1\) and some \(A_2\) large enough to satisfy Condition (3).
    
    \item \textit{Condition (4).} By picking \(b=1,\lambda=\frac{2}{9}\), \(b\) is a positive integer and \(0<\frac{2}{9}e^{11/9}\approx 0.75<1\).
\end{itemize}

We are ready to bound \(\phi(n_0,\ell)\). Brun's sieve shows 
\[
\begin{aligned}
    \phi(n_0,\ell)=S(\mathcal{A},\mathcal{P},z)\geq &\ell \frac{\varphi(n_0)}{n_0}\left(1-\frac{2\lambda^{2b}e^{2\lambda}}{1-\lambda^2 e^{2+2\lambda}}\exp((2b+2)\frac{c_1}{\lambda\log z})\right)
    \\
    &+O(z^{2b-1+\frac{2.01}{e^{2\lambda/\kappa}-1}})\\
    \geq &\ell \frac{\varphi(n_0)}{n_0}\left(1-0.3574719\exp(\frac{18 c_1}{\log z})\right)+O(z^{4.59170})
\end{aligned}
\]

Which means that there exists some positive constant \(c_2\) such that for some small $\e>0$,
\[
\phi(n_0,\ell)\geq \ell \frac{\varphi(n_0)}{n_0}\left(1-\frac{2}{5}\exp(\frac{18c_1}{\log z})\right)-c_2z^{5-\e}.
\]

We choose some constant \(z_0\) such that \(\frac{2}{5}\exp(\frac{18c_1}{\log z_0}) \leq \frac12\), if \(z>z_0\)(we will later make sure \(z>z_0\)), then
\[
\phi(n_0,\ell)\geq \frac12 \ell \frac{\varphi(n_0)}{n_0}-c_2z^{5-\e}.
\]

Note if \(n_1|n_2\), then \(\varphi(n_1)/n_1\geq \varphi(n_2)/n_2\) since \(\varphi(n)/n=\prod_{p|n}(1-1/p)\) and every prime factor of $n_1$ is also the prime factor of \(n_2\). Therefore,
\[
\phi(n_0,\ell)\geq \frac12 \ell \frac{\varphi(n)}{n}-c_2z^{5-\e}.
\]

Recall there exists a \(c_3\) such that \(\frac{\phi(n)}{n}\geq\frac{c_3}{\log\log n}\),

\[
\begin{aligned}
    \phi(n,\ell)&\geq \phi(n_0,\ell)-c_1\frac{\ell\log n}{z\log \log n}\\
    &\geq \frac{1}{2} \ell  \frac{\phi(n)}{n}-c_2z^{5-\e} -c_1\frac{\ell\log n}{z\log \log n}\\
    &= \frac{1}{4}\ell\frac{\phi(n)}{n}+(\frac{1}{8}\ell\frac{\phi(n)}{n}-c_2z^{5-\e}) +( \frac{1}{8}\ell\frac{\phi(n)}{n}-c_1\frac{\ell\log n}{z\log \log n})\\
    &\geq \frac{1}{4}\ell\frac{\phi(n)}{n}+(\frac{c_3}{8}\frac{\ell}{\log\log n}-c_2z^{5-\e}) +( \frac{c_3}{8}\frac{\ell}{\log\log n}-c_1\frac{\ell\log n}{z\log \log n}).
\end{aligned}
\]

By picking $z = \frac{8c_1}{c_3}\log n = C\log n$, we obtain $c_1\frac{\ell\log n}{z\log \log n} \leq \frac{c_3}{8}\frac{\ell}{\log\log n}$. By picking $c=8\frac{c_2}{c_3}C^5$ and
$\ell\geq\frac{8c_2}{c_3}C^5\log^{5-\e}n\log\log n=c\log^{5-\e}n\log\log n$, we obtain
$c z^{5-\e}\leq \frac{\ell}{\log\log n}$.

Recall for the above to be true we require $z>z_0$. Note $z=C\log n$, for $z>z_0$ for sufficiently large $n$. If \(n\) is sufficiently large and \(\ell \geq c\log^5n \geq c\log^{5-\e}n\log\log n\), then \(\phi(n,\ell) \geq \frac{\ell}{4n}\varphi(n)\). 
Thus, for all \(n\) and \(\ell \ \geq c\log^5 n\), \(\phi(n,\ell) = \Omega(\ell\frac{\varphi(n)}{n})\). 

\noindent \textit{Case 2.} $\ell > c\log n$.

Observe that for all $\ell\leq n$, $\varphi(n,\ell)\geq 1+\pi(\ell) - \omega(n)$. This is because the primes no larger than $\ell$ are relatively prime to $n$ if it is not a factor of $n$, and $1$ is also relatively prime to $n$.

We show there exists a constant $c$ such that $\varphi(n,\ell)=\Omega(\frac{\ell}{\log \ell})$ for $\ell\geq c \log n$, by showing $\frac{1}{2}\pi(\ell)\geq \omega(n)$. There exists constant $c_1,c_2$ such that $\pi(\ell) \geq c_1\frac{\ell}{\log \ell}$ and $\omega(n) \leq c_2\frac{\log n}{\log \log n}$. Therefore, we want some $\ell$, such that $\frac{c_1}{2}\frac{\ell}{\log \ell} \geq c_2 \frac{\log n}{\log \log n}$. The desired relation holds as long as $\ell \geq c \log n$ for some sufficiently large $c$. 

The constant $c$ in two parts of the proof might be different, we pick the larger of the two to be the one in the theorem.  
\end{proof}
\begin{comment}
\subsection{Future work}
Observe that the best possible through our argument seems to be $O(\frac{n}{\ell}\log \ell})$ bound when $\ell$ is not too small. 

In particular, we offer two conjectures, if both are true, would give us a $O(\frac{n}{\ell}\log \ell)$ bound for $\ell>c\log n$, which would be the best possible through our approach. Both would require tighter analysis and construction.

Finally, this implies we can obtain the following result:

\begin{theorem}
\[f(n,\ell) \leq \begin{cases}
              O(\frac{n}{\ell}\log \ell) & \text{ if } \ell\geq e^{\frac{c \log \log n\log \log \log n}{\log \log \log \log n}\\
              O(\frac{n}{\ell} (\log \log n)^2) & \text{ if } c\log n\leq \ell\leq e^{\frac{c \log \log n\log \log \log n}{\log \log \log \log n}\\
              n & \text{ for all } \ell.
              \end{cases}\]
\end{theorem}
\end{comment}

\end{document}